\newif\ifappendix\appendixtrue
\newenvironment{proof*}{%
	\begin{tcolorbox}[colback=green!12,colframe=white,/tcb/breakable]%
	\textit{Proof}. \\%
}{%
	\qed%
	\end{tcolorbox}%
}
\newcommand{\lrangle}[1]{\langle #1 \rangle}
\newcommand{\brtext}[1]{[\textrm{\small #1}]}
\newcommand{\trule}[1]{{\footnotesize\brtext{#1}}}
\newcommand{\noi}{\noindent}
\newcommand{\bnfis}{\;\;::=\;\;}
\newcommand{\bnfbar}{\;\;\;|\;\;\;}
\newcommand{\tree}[2]{
\ensuremath{\displaystyle
		\frac
		{
			#1
		}{
			\raisebox{-0.4mm}{$\displaystyle{#2}$}
		}
	}
}
\newcommand{\vect}[1]{\tilde{#1}}
\newcommand{\set}[1]{\{#1\}}
\newcommand{\es}{\emptyset}
\newcommand{\setbar}{\ \ |\ \ }
\newcommand{\dom}[1]{\mathtt{dom}(#1)}
\newcommand{\freev}[1]{\lrangle{#1}}
\newcommand{\boundv}[1]{(#1)}
\newcommand{\send}[1]{\overline{#1}}
\newcommand{\ol}[1]{\overline{#1}}
\newcommand{\receive}[1]{#1}
\newcommand{\pisend}[2]{\send{#1}#2.}
\newcommand{\pirecv}[2]{\receive{#1}#2.}
\newcommand{\inact}{\mathbf{0}}
\newcommand{\Par}{\;|\;}
\newcommand{\news}[1]{(\nu\ #1)}
\newcommand{\newsp}[2]{(\nu\ #1)(#2)}
\newcommand{\varp}[1]{#1}
\newcommand{\rec}[2]{\mu \varp{#1}. #2}
\newcommand{\fn}[1]{\mathtt{fn}(#1)}
\newcommand{\scong}{\equiv}
\newcommand{\red}{\longrightarrow}
\newcommand{\subst}[2]{\set{#1/#2 }}
\newcommand{\context}[2]{#1[#2]}
\newcommand{\vart}[1]{\mathsf{#1}}
\newcommand{\ssep}{;}
\newcommand{\outses}{!}
\newcommand{\inpses}{?}
\newcommand{\selses}{\oplus}
\newcommand{\brases}{\&}
\newcommand{\dual}[1]{\overline{#1}}
\newcommand{\cat}{\cdot}
\newcommand{\bout}[2]{#1 \outses \freev{#2} \ssep}
\newcommand{\binp}[2]{#1 \inpses \boundv{#2} \ssep}
\newcommand{\bsel}[2]{#1 \selses #2 \ssep}
\newcommand{\bbra}[2]{#1 \brases \set{#2}}
\newcommand{\tfont}[1]{\mathtt{#1}}
\newcommand{\chtype}[1]{\lrangle{#1}}
\newcommand{\trec}[2]{\mu\vart{#1}.#2}
\newcommand{\tinact}{\tfont{end}}
\newcommand{\Ga}{\Gamma}
\newcommand{\proves}{\vdash}
\newcommand{\by}[1]{\stackrel{#1}{\longrightarrow}}
\newcommand{\true}{\sessionfont{tt}}
\newcommand{\s}{\ensuremath{s}}
\newif\ifny\nyfalse
\newif\ifdm\dmtrue
\newif\ifrhu\rhutrue
\newcommand{\chequiv}{\stackrel{.}{\leftrightarrow}}
\newcommand{\outcon}{\stackrel{.}{\prec}}
\newcommand{\incon}{\stackrel{.}{\succ}}
\newcommand{\freshin}{\#}
\newcommand{\psiout}[2]{\ol{#1} #2.}
\newcommand{\psiinp}[3]{#1(\lambda #2) #3.}
\newcommand{\case}{\mathbf{case}}
\newcommand{\casesep}{\mathrel{[\hspace{-0.1ex}]}}
\newcommand{\terms}{\mathbf{T}}
\newcommand{\conditions}{\mathbf{C}}
\newcommand{\assertions}{\mathbf{A}}
\newcommand{\encode}[1]{\llbracket #1\rrbracket}
\newcommand{\psiunit}{\mathbf{1}}
\newcommand{\frames}{\triangleright}
\newcommand{\psicong}{\scong_\Psi}
\newcommand{\pass}[1]{\llparenthesis #1 \rrparenthesis}
\newcommand{\spos}[1]{#1^+}
\newcommand{\sneg}[1]{#1^-}
\newcommand{\seq}[1]{\widetilde{#1}}
\newcommand{\ctxhole}{[\hspace{0.8pt}]}
\renewcommand{\by}[1]{\xrightarrow{#1}}
\renewcommand{\true}{\mathsf{true}}
\newcommand{\contractverticalspace}{\vspace{-0.5em}}
\newtheorem{theorem}{Theorem}[section]
\newtheorem{lemma}{Lemma}[section]
\newtheorem{definition}{Definition}[section]
	\title{Session Types for Broadcasting}
	\author{
		Dimitrios Kouzapas
		\institute{University of Glasgow}
		\email{dimitrios.kouzapas@glasgow.ac.uk}
	\and
		Ram\={u}nas Gutkovas
		\institute{Uppsala University}
		\email{ramunas.gutkovas@it.uu.se}
	\and
		Simon J. Gay
		\institute{University of Glasgow}
		\email{simon.gay@glasgow.ac.uk}
	}
\begin{document}
	\maketitle

	\begin{abstract}
	Up to now session types have been used under the
	assumptions of point to point communication, to ensure the
	linearity of session endpoints, and reliable communication,
	to ensure send/receive duality.
	In this paper we define  a session type theory for
	broadcast communication semantics that by definition
	do not assume point to point and reliable communication.
	Our session framework
	lies on top of the parametric framework of broadcasting
	$\psi$-calculi, giving insights on developing session
	types within a parametric framework. Our session type theory
	enjoys the properties of soundness and safety.
	We further believe that the solutions proposed
	will eventually provide a deeper understanding
	of how session types principles should be applied
	in the general case of communication semantics.
\end{abstract}

	\section{Introduction}


Session types 
\cite{honda.vasconcelos.kubo:language-primitives,
  yoshida.vasconcelos:language-primitives, HYC08} allow communication
protocols to be specified as types and verified by type-checking. Up
to now, session type systems have assumed reliable, point to point 
message passing communication. Reliability
is important to maintain send/receive duality, and
point to point communication
is required to ensure session endpoint linearity.

In this paper we propose a session type system
for unreliable broadcast communication.  Developing such a system was
challenging for two reasons: (i) we needed to extend binary session
types to handle unreliability as well as extending the notion of
session endpoint linearity, and (ii) the reactive control flow of a
broadcasting system drove us to consider typing patterns of
communication interaction rather than communication prefixes. The key
ideas are (i) to break the symmetry between the $\spos{s}$ and
$\sneg{s}$ endpoints of channel $s$, allowing $\spos{s}$ (uniquely
owned) to broadcast
and gather, and $\sneg{s}$ to be shared; (ii) to implement (and type) the gather operation as an
iterated receive. We retain the standard binary session type constructors.

We use $\psi$-calculi
\cite{DBLP:conf/lics/BengtsonJPV09} as
the underlying process framework, and specifically we use the
extension of the $\psi$-calculi family with broadcast
semantics \cite{conf/sefm/BorgstromHJRVPP11}.
$\psi$-calculi provide a parametric process calculus
framework for extending the semantics of the $\pi$-calculus with
arbitrary data structures and logical assertions. Expressing our work
in the $\psi$-calculi framework allows us to avoid defining a new
operational semantics, instead defining the semantics of our broadcast
session calculus by translation into a broadcast $\psi$-calculus.  
Establishing a link between session types and $\psi$-calculi is
therefore another
contribution of our work.

{\bf Intuition through Demonstration.}
We demonstrate the overall intuition by means of an example.
For the purpose of the demonstration we imply a set of semantics,
which we believe are self explanatory. Assume types $S =
!T;?T;\tinact$, $\dual{S} = ?T;!T;\tinact$ for some data type $T$, and
typings $\spos{s}:S$, $\sneg{s}:\dual{S}$, $a:\chtype{S}$, $v:T$. The
session type prefix $!T$ means \emph{broadcast} when used by
$\spos{s}$, and \emph{single destination send} when used by
$\sneg{s}$. Dually, $?T$ means \emph{gather} when used by $\spos{s}$,
and \emph{single origin receive} when used by $\sneg{s}$.

{\bf Session Initiation} through broadcast,
creating an arbitrary number of receiving endpoints:

\begin{tabular}{c}
	$\pisend{a}{\sneg{s}} P_0 \Par \pirecv{a}{x} P_1 \Par \pirecv{a}{x} P_2 \Par \pirecv{a}{x} P_3 \red P_0 \Par P_1 \subst{\sneg{s}}{x} \Par P_2 \subst{\sneg{s}}{x} \Par \pirecv{a}{x} P_3$
\end{tabular}

\noi Due to unreliability, $\pirecv{a}{x} P_3$ did not initiate the session. We denote the initiating
and accepting session endpoint as $\spos{s}$ and $\sneg{s}$ respectively.

{\bf Session Broadcast} from the
$\spos{s}$ endpoint results in multiple $\sneg{s}$ endpoints 
receiving:

\begin{tabular}{c}
	$\bout{\spos{s}}{v} P_0 \Par \binp{\sneg{s}}{x} P_1 \Par \binp{\sneg{s}}{x} P_2 \Par \binp{\sneg{s}}{x} P_3 \red P_0 \Par P_1 \subst{v}{x} \Par P_2 \subst{v}{x} \Par \binp{\sneg{s}}{x} P_3$
\end{tabular}

\noi Due to unreliability, a process (in the above reduction, process
$\binp{\sneg{s}}{x} P_3$) might not receive a message. In this case
the session endpoint that belongs to process $\binp{\sneg{s}}{x} P_3$
is considered broken, and later we will introduce a recovery mechanism.

{\bf Gather:} The next challenge is to achieve the sending of values
from the $\sneg{s}$ endpoints to the $\spos{s}$ endpoint. The gather
prefix $\binp{\spos{s}}{x} P_0$ is translated (in
Section~\ref{sec:encoding}) into a process that iteratively
receives messages from the $\sneg{s}$ endpoints, non-deterministically
stopping at some point and passing control to $P_0$.

{\small
\begin{tabular}{c}
	$\binp{\spos{s}}{x} P_0 \Par \bout{\sneg{s}}{v_1} P_1 \Par \bout{\sneg{s}}{v_2} P_2 \Par \bout{\sneg{s}}{v_3} P_3 \red^*  P_0' \Par P_1 \Par \bout{\sneg{s}}{v_2} P_2 \Par P_3$
\end{tabular}
}\\
\noi with $P_0 \subst{\set{v_1,v_3}}{x} \red P_0'$.


\noi After two reductions the messages from processes
$\bout{\sneg{s}}{v_1} P_1$ and $\bout{\sneg{s}}{v_3} P_3$
had been received by the $\spos{s}$ endpoint.
On the third reduction the $\spos{s}$
endpoint decided not to wait for more messages and
proceeded with its session non-deterministically,
resulting in a broken sending endpoint ($\bout{\sneg{s}}{v_2} P_2$),
which is predicted by the unreliability of the 
broadcast semantics. The received messages, $v_1$ and $v_2$, were
delivered to $P_0$ as a set.

{\bf Prefix Enumeration:}
The above semantics, although capturing broadcast session
initiation and interaction, still violate
session type principles due to the unreliability of
communication:

{
\small
\begin{tabular}{rcl}
	$\bout{\spos{s}}{v_1} \bout{\spos{s}}{v_2} \inact \Par \binp{\sneg{s}}{x} \binp{\sneg{s}}{y} \inact \Par \binp{\sneg{s}}{x} \binp{\sneg{s}}{y} \inact$ &$\red$&\\
	$\bout{\spos{s}}{v_2} \inact \Par \binp{\sneg{s}}{y} \inact \Par \binp{\sneg{s}}{x} \binp{\sneg{s}}{y} \inact$ &$\red$&
	$\inact \Par  \inact \Par \binp{\sneg{s}}{y} \inact$
\end{tabular}
}

\noi The first reduction produced a broken endpoint,
$\binp{\sneg{s}}{x} \binp{\sneg{s}}{y} \inact$,
while the second reduction reduces the broken endpoint.
This situation is
not predicted by session type principles. To solve this problem
we introduce an enumeration on session prefixes:

\begin{tabular}{l}
	\small
	$\bout{(\spos{s}, 1)}{v_1} \bout{(\spos{s}, 2)}{v_2} \inact \Par \binp{(\sneg{s}, 1)}{x} \binp{(\sneg{s}, 2)}{y} \inact \Par \binp{(\sneg{s}, 1)}{x} \binp{(\sneg{s}, 2)}{y} \inact \red$\\
	\small
	$\bout{(\spos{s}, 2)}{v_2} \inact \Par \binp{(\sneg{s}, 2)}{y} \inact \Par \binp{(\sneg{s}, 1)}{x} \binp{(\sneg{s}, 2)}{y} \inact \red
	\inact \Par  \inact \Par \binp{(\sneg{s}, 1)}{x} \binp{(\sneg{s}, 2)}{y} \inact$
\end{tabular}

\noi The intuitive semantics described in this example
are encoded in the $\psi$-calculi framework. From this
it follows that
all the operational semantics, typing system and
theorems are stated using the $\psi$-calculus 
framework.

{\bf Contributions.}
This paper is the first to propose session types as a
type meta-theory for the $\psi$-calculi. Applying session
semantics in such a framework meets the ambition that
session types can effectively describe general communication
semantics. A step further is the development of a session
type framework for broadcast communication semantics. It 
is the first time that session types escape the assumptions
of point to point communication and communication reliability.
We also consider as a contribution the fact that we use
enumerated session prefixes in order to maintain consistency
of the session communication. We believe that
this technique will be applied in future session type 
systems that deal with unreliable and/or unpredictable
communication semantics.

{\bf Related Work.}  Carbone \emph{et al.} \cite{CarboneHY08} extended binary
session types with exceptions, allowing both parties in a session to
collaboratively handle a deviation from the standard
protocol. Capecchi \emph{et al.}  \cite{CGY2014} generalized a similar
approach to multi-party sessions. In contrast, our recovery processes
allow a broadcast sender or receiver to autonomously handle a failure
of communication. Although it might be possible to represent
broadcasting in multi-party session type systems, by explicitly
specifying separate messages from a single source to a number of
receivers, all such systems assume reliable communication for every message.

		\section{Broadcast Session Calculus}
	\label{syntax}


	We define an intuitive syntax for our calculus. The
	syntax below will be encoded in the $\psi$-calculi framework so
	that it will inherit the operational semantics.

{\small
	\begin{tabular}{rcllcll}


		$P, R$	& $\bnfis$  & $\pisend{a}{\sneg{s}} P \bnfbar \pirecv{a}{x} P \bnfbar \bout{\spos{s}}{v} P
			\bnfbar \bout{\sneg{s}}{v} P \bnfbar \binp{\spos{s}}{x} P \bnfbar \binp{\sneg{s}}{x} P$\\

			& $\bnfbar$ & $\bsel{\spos{s}}{l} P \bnfbar \bbra{\sneg{s}}{l_i:P_i}  \bnfbar P \bowtie R \bnfbar \inact
			\bnfbar \rec{X}{P} \bnfbar \varp{X} \bnfbar P \Par P \bnfbar \news{n} P$

	\end{tabular}
}

	Processes $\pisend{a}{\sneg{s}} P$, $\pirecv{a}{x} P$ are
        prefixed with session initiation operators that interact
        following the broadcast semantics. Processes
        $\bout{\spos{s}}{v} P$, $\bout{\sneg{s}}{v} P$
        define two different sending patterns. For the $\spos{s}$
        endpoint we have a broadcast send. For the $\sneg{s}$ endpoint
        we have a unicast send. Processes
        $\binp{\spos{s}}{x} P$, $\binp{\sneg{s}}{x} P$
        assume gather (i.e.\ the converse of broadcast send) and
        unicast receive, respectively. 
	We allow selection and branching
        $\bsel{\spos{s}}{l} P$, $\bbra{\sneg{s}}{l_i:P_i}$
        only for broadcast semantics from the $\spos{s}$ to the
        $\sneg{s}$ endpoint. Each process can carry a recovery
	process $R$ with the operator $P \bowtie R$. The process
        can proceed non-deterministically to recovery if the session
        endpoint is broken due to the unreliability of the
        communication. Process $R$ is carried along as process
        $P$ reduces its prefixes.
	The rest of the processes are standard
        $\pi$-calculus processes.
	
	Structural congruence is defined over the abelian monoid
        defined by the parallel operator $(\Par)$ and the inactive
        process $(\inact)$ and additionally satisfies the rules:

	\begin{tabular}{c}
		$\news{n} \inact \scong \inact \qquad P \Par \news{n}Q \scong \newsp{n}{P \Par Q}$ if $n \notin \fn{P}$
	\end{tabular}

	\section{Broadcast $\psi$-Calculi}

Here we define the parametric framework of $\psi$-calculi for broadcast.
For a detailed description of $\psi$-calculi we refer the reader to \cite{DBLP:conf/lics/BengtsonJPV09}. 

We fix a countably infinite set of names $\mathcal{N}$ ranged over by $a,b,x$.
$\psi$-calculi are parameterised over three nominal sets: terms ($\terms$ ranged over by $M,N,L$),
conditions ($\conditions$ ranged over by $\varphi$), and assertions ($\assertions$ ranged over by $\Psi$);
and operators: channel equivalence, broadcast output and input connectivity
$\chequiv,\outcon, \incon : \terms \times \terms \rightarrow \conditions$,
assertion composition $\otimes: \assertions\times\assertions\rightarrow \assertions$,
unit $\psiunit \in \assertions$, entailment relation $\proves \assertions\times\conditions$,
and a substitution function substituting terms for names for each set.
The channel equivalence is required to be symmetric and transitive, and
assertion composition forms abelian monoid with $\psiunit$ as the unit element. 
We do not require output and input connectivity be symmetric, i.e.,
$\Psi \proves M \outcon N$ is not equivalent to $\Psi \proves N \incon M$, however
for technical reasons require that the names of $L$ should be included in $N$ and $M$
whenever $\Psi\proves N \outcon L$ or $\Psi\proves L \incon M$.
The agents are defined as follows
\contractverticalspace
{
\small
\[
	P, Q	\bnfis \psiinp{M}{\seq{a}}{N}P
			\bnfbar \psiout{M}{N}P
			\bnfbar \case\,\varphi_1 : P_1 \casesep \dots \casesep \varphi_n : P_n
			\bnfbar \pass{\Psi} 
			\bnfbar (\nu a)P
			\bnfbar P \Par Q
			\bnfbar !P
\]
}%
where $\seq{a}$ bind into $N$ and $P$. The assertions in the case and replicated agents are required to be guarded.
We abbreviate the case agent as $\case\,\seq{\varphi}:\seq{P}$;
we write 
$\inact$ for $\pass{\psiunit}$, we also write $a \freshin X$ to intuitively
mean that name $a$ does not occur freely in $X$.


We give a brief intuition behind the communication parameters:  Agents
unicast
whenever their subject of their prefixes are channel equivalent, to give an
example, $\psiout{M}{L}P$ and $\psiinp{N}{\seq{a}}{K}Q$ communicate whenever
$\Psi\proves M\chequiv N$.  In contrast, broadcast communication is mediated by
a broadcast channel, for example, the agents
$\psiout{M}{N}P$ and 
$\psiinp{M_i}{\seq{a_i}}{N_i}P_i$ (for $i > 0$)
communicate if they can broadcast and receive from the same
channel $\Psi \proves M \outcon K$ and $\Psi\proves K \incon M_i$.

In addition to the standard structural congruence laws of pi-calculus we
define the following, with the assumption that $a \mathrel{\freshin}
\seq{\varphi}, M, N, \seq{x}$ and $\pi$ is permutation of a sequence.
\contractverticalspace
\[
	\begin{array}{cccc}
	(\nu a)\case\,\seq{\varphi}: \seq{P} \psicong \case\,\seq{\varphi} : \seq{(\nu a)P} &
	\case\,\seq{\varphi}: \seq{P} \psicong \case\,\pi\cdot(\seq{\varphi} : \seq{P}) \\
	\psiout{M}{N}(\nu a)P \psicong (\nu a)\psiout{M}{N}P  &
	\psiinp{M}{\seq{x}}{N}(\nu a)P \psicong (\nu a)\psiinp{M}{\seq{x}}{N}P \\
	\end{array}
\]
\contractverticalspace

The following is a reduction context with two types of numbered holes (condition
hole $\hat\ctxhole$ and process hole $\ctxhole$) such that no two
holes of the same type have the same number.
\contractverticalspace
\[
\textstyle
C \bnfis (\case\,\hat{\ctxhole}_j : C \casesep \seq{\varphi} : \seq{P}) \Par C 
	\;\bnfbar\; \prod_{k > 0}\ctxhole_{i_k}
\]

The filling of the holes is defined in the following way:
filling a process (resp. condition) hole with a assertion guarded process
(resp. condition) taken from the number position of a given sequence.
We denote filling of holes as $C[(\varphi_i)_{i\in I} ; (P_j)_{j\in J}
  ; (Q_k)_{k\in K}]$ where the first component is for filling the
condition holes and the other two are for filling process holes. 

We require that $I$ is equal to the numbering set of condition holes and
furthermore $J$ and $K$ are disjoint and their union is equal to the numbering
set of context for the process holes.
We also require that every $J$ numbered hole is either in parallel
with any of the $K$ holes or is parallel to $\case$ where recursively
a $K$ numbered hole can be found. When the numbering is understood we simply write
$C[\seq{\varphi}; \seq{P}; \seq{Q}]$.

In the following we define reduction semantics of $\psi$-calculi, in addition to the standard labelled transition semantics \cite{DBLP:conf/lics/BengtsonJPV09}.  The two rules
describe unicast and broadcast semantics. We identify agents up to
structural congruence, that is, we also assume the rule such that two agents
reduce if their congruent versions reduce.  In the broadcast rule, if for some
$a \in \seq{a}$, $a \in n(K)$, then $\seq{b} = \seq{a}$, otherwise $\seq{b} =
\seq{a} \setminus n(N)$.
To simplify the presentation we abbreviate $\prod\seq{\pass\Psi}$ as
$\hat{\pass\Psi}$ and $\otimes_i \Psi_i$ as $\hat\Psi$. We prove that reductions correspond to silent and broadcast transitions.

\noindent
\begin{center}
\begin{tabular}{c}
\small
	$
		\tree{
			\textstyle
			 N' = N[\seq{x} := \seq{L}] \text{ and } \hat\Psi\proves M\chequiv M' \text{ and } 
			\forall i .\hat\Psi\proves\varphi_i
		}{
			\textstyle
			(\nu\seq{a})(C[\seq{\varphi};\; \seq{R};\; \psiinp{M}{\seq{x}}{N}P ,\, \psiout{M'}{N'}Q] \Par \hat{\pass{\Psi}})
				\quad\rightarrow\quad
				(\nu\seq{a})(P[\seq{x} := \seq{L}] \Par Q \Par \prod \seq{R} \Par \hat{\pass{\Psi}})
		}
	$
	\\[5mm]
	\small
	$
		\tree{
			\textstyle
			\hat\Psi \proves M\outcon K
			\text{ and }
			\forall i. \hat\Psi\proves K \incon M'_i
			\text{ and } N'_i[\seq{x}_i := \seq{L}_i] = N
			\text{ and } \forall j. \hat\Psi\proves\varphi_j
		}{
			\textstyle
			(\nu\seq{a})(C[\seq{\varphi} ;\; \seq{R} ;\; \psiout{M}{N}P, (\seq{\psiinp{M'}{\seq{x}}{N'}Q})] \Par \hat{\pass\Psi})
				 \quad\rightarrow\quad
				(\nu\seq{b})(P \Par \prod_{i} Q_i[\seq{x}_i := \seq{L}_i] \Par \prod \seq{R}\Par \hat{\pass\Psi})
		}
	$
\end{tabular}
\end{center}

\begin{theorem}
    \label{thm:lts-red-correspondence}
	Let $\alpha$ be either a silent or broadcast output action.
    Then, $\psiunit \frames P \by\alpha P' \text{ iff } P \rightarrow P'$
\end{theorem}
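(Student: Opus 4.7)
The plan is to prove both implications by induction, leveraging the standard labelled transition semantics of $\psi$-calculi from \cite{DBLP:conf/lics/BengtsonJPV09} as a bridge. Two cases of $\alpha$ must be handled: $\alpha = \tau$, which should correspond to the unicast reduction rule, and $\alpha$ a (possibly bound-name-extruding) broadcast output $!\overline{K}\,N$, which should correspond to the broadcast reduction rule.

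For the direction $\psiunit \frames P \by{\alpha} P' \Rightarrow P \rightarrow P'$, I would proceed by induction on the derivation of the transition. In the $\tau$-case, the derivation must ultimately combine an input and an output transition via the \textsc{Com} rule, possibly under \textsc{Par}, \textsc{Scope}, \textsc{Case}, and \textsc{Rep}; I collect all the enclosing parallel branches, conditionals whose guards were validated, and restrictions into a single reduction context $C$ together with the sequence of assertion-frames $\hat{\pass\Psi}$ gathered along the derivation, and read off the premises of the unicast reduction rule ($\hat\Psi \proves M \chequiv M'$ and $\hat\Psi \proves \varphi_i$ for each chosen guard). In the broadcast case, the derivation combines one \textsc{BrOut} with finitely many \textsc{BrIn} transitions via the broadcast-composition rule, possibly iterated; I gather all receiving branches into the $K$-indexed family of holes of $C$ and all frozen parallel components into the $J$-indexed holes, and verify the side condition on $\seq{b}$ by comparing which extruded names appear in the channel term $K$ with which appear only in the message $N$.

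For the converse direction $P \rightarrow P' \Rightarrow \psiunit \frames P \by{\alpha} P'$, I would proceed by induction on the structure of the reduction context $C$. The leaves of $C$ (a single $\prod_k \ctxhole_{i_k}$) correspond to the base \textsc{Com} / \textsc{BrCom} LTS rules; each outer layer of parallel composition, case with a satisfied guard, or name restriction lifts via the corresponding \textsc{Par}, \textsc{Case}, \textsc{Scope} (and \textsc{Open}, to bind the extruded names according to the clause selecting $\seq{b}$) rule. The assertion components $\hat{\pass\Psi}$ are absorbed into the ambient frame so that the side conditions $\hat\Psi \proves M \chequiv M'$, $\hat\Psi \proves M \outcon K$, and $\hat\Psi \proves K \incon M'_i$ become exactly the premises required by the LTS communication rules. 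Structural congruence is handled by the standard fact that LTS transitions are closed under $\psicong$, which I would invoke to absorb the identifications up to $\psicong$ used in the reduction semantics.

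The main obstacle is the mismatch in granularity: the reduction context $C$ gathers an unbounded number of receivers and arbitrary layers of case/parallel/restriction in one step, whereas the LTS builds the same transition by iterated \textsc{BrCom}, \textsc{Par}, \textsc{Case}, and \textsc{Scope} applications. The bookkeeping for the bound names $\seq{a}$ (choosing between $\seq{b}=\seq{a}$ and $\seq{b}=\seq{a}\setminus n(N)$ depending on whether some $a \in \seq{a}$ occurs in the broadcast-channel term $K$) is exactly what the \textsc{Open}/\textsc{Scope} rules of the broadcast LTS enforce, so the argument reduces to matching these side conditions rule-for-rule; but the inductive reshuffling of $C$ into a derivation tree and back, together with keeping the frame $\hat{\pass\Psi}$ consistent across the two presentations, is where the bulk of the technical effort lies.
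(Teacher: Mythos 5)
Your proposal is correct and takes essentially the same route as the paper's own (sketched) proof: the paper likewise isolates the $\implies$ direction as the hard one and handles it by proving analogous characterization results for the \emph{other} actions (input, output, broadcast input) and then showing that the components in parallel have the right reduction-context form, which is exactly what your collection of enclosing parallel branches, validated guards, restrictions and assertion frames into $C$ and $\hat{\pass\Psi}$ amounts to. The one refinement worth making explicit is that your induction on the derivation of a silent or broadcast-output transition must be strengthened to a statement covering all action kinds, since the premises of the communication rules are input/output transitions to which the unstrengthened induction hypothesis does not apply.
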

\begin{proof}[Proof Sketch]
The complicated direction is $\implies$. One needs to prove similar results for the
other actions, and then demonstrate that they in parallel have the right form.
\end{proof}


	\section{Translation of Broadcast Calculus to Broadcast $\psi$-Calculus}
\label{sec:encoding}

The semantics for the broadcast session calculus are given as an instance of the
$\psi$-calculi with broadcast \cite{conf/sefm/BorgstromHJRVPP11}.
To achieve this effect we define a translation between the syntax of \S~\ref{syntax}
and a particular instance of the $\psi$-calculi. Operational semantics are then
inherited by the $\psi$-calculi framework.


%
%
We fix the set of labels $\mathcal{L}$ and ranged over by $l,l_1,l_2\dots$. The following
are the nominal sets
{\small
\[
\begin{array}{rcl}
	\terms &=& \mathcal{N} \cup \set{*} \cup \set{(n^p, k), (n^p, i), (n^p, k, \mathsf{u}), (n^p, l, k), n\cdot k \setbar n, k \in \terms \wedge i \in \mathbb{N} \wedge l \in \mathcal{L} \wedge p \in \set{+, -}}\\
	\conditions &=& \set{t_1 \chequiv t_2, t_1 \outcon t_2, t_1 \incon t_2 \setbar t_1, t_2 \in \terms} \cup \set{\true} \\
	\assertions &=& \terms \rightarrow \mathbb{N}
\end{array}
\]
}
We define the $\otimes$ operator (here defined as multiset union) and the $\proves$ relation:
\[
\small
\begin{array}{c|c}
        (f \otimes g)(n) = \left\{\begin{array}{ll}
                        f(n) + g(n) & \text{if $n \in dom(f) \cap dom(g)$} \\
                        f(n) & \text{if $n \in dom(f)$}\\
                        g(n) & \text{if $n \in dom(g)$} \\
                        \text{undefined} & \text{otherwise}
                     \end{array}\right.
	&
	\begin{array}{rcl}
		\Psi &\proves& (s^{p_1}, k, \mathsf{u}) \chequiv (s^{p_2}, j, \mathsf{u}) \textrm{ iff } \Psi(k) = \Psi(j)\\
		\Psi &\proves& (\spos{s}, k) \outcon (\spos{s}, i) \textrm{ iff } \Psi(k) = i\\
		\Psi &\proves& (\spos{s}, i) \incon (\sneg{s}, k) \textrm{ iff } \Psi(k) = i \\
		\Psi &\proves& \true \qquad \Psi\proves a \chequiv a \in \mathcal{N}
	\end{array}
\end{array}
\]
It can be easily checked that the definition is indeed a broadcast $\psi$-calculus.
We write $\Sigma_{i\in I}P$ as a shorthand for $\case\ \seq{\true}: \seq{P}$, and $P + Q$ for $\case\ \true: P \casesep \true : Q$


%




The translation is parameterised by $\rho$, which tracks the
enumeration of session prefixes, represented by multisets of asserted
names $\pass{k}$. 
The replication in $\binp{\spos{s}}{x, u^i} P$ implements the iterative broadcast receive.
We annotated the prefixes $\binp{\spos{s}}{x, u^i}^b P$ and $\rec{X^b}{P \bowtie R}$
with $b \in \set{0, 1}$ to capture their translation as a two step ($0$ and $1$)
iterative process.
The recovery process can be chosen in a non-deterministic way instead of
a $\sneg{s}$ prefix. Otherwise it is pushed in the continuation of the translation.
\[
\small
\begin{array}{l}
	\encode{\pisend{a}{\sneg{s}}P \bowtie R}_\rho =
        (\nu k)(\psiout{a}{\sneg{s}}\encode{P \bowtie R}_{\rho\cup\set{\spos{s} : k}})
	\qquad \quad
	\encode{\pirecv{a}{x} P \bowtie R}_\rho =
	(\nu k)(\psiinp{a}{x}{x}\encode{P \bowtie R}_{\rho\cup\set{\sneg{s} : k}})
	\\
	\encode{\bout{\spos{s}}{v} P \bowtie R}_{\rho \cup \set{\spos{s} : k}} =
	\psiout{(\spos{s}, k)}{v} ( \encode{P \bowtie R}_{\rho \cup \set{\spos{s} : k}} \Par \pass{k} )
	\\
	\encode{\bout{\sneg{s}}{v} P \bowtie R}_{\rho\cup\set{\sneg{s}: k }} =
	\psiout{(\sneg{s}, k, \mathsf{u})}{v}(\encode{P \bowtie R}_{\rho\cup\set{\sneg{s}:k}} \Par \pass{k} ) +
	\encode{R}_{\rho\cup\set{\sneg{s}: k }}
	\\
	\begin{array}{ll}
		\encode{\binp{\spos{s}}{x, u}^0 P \bowtie R}_{\rho\cup\set{\spos{s}: k}} =&
		\newsp{n}{\psiout{n}{u}\inact \Par !(\psiinp{n}{x}{x}
		(\psiinp{(\spos{s}, k, \mathsf{u})}{y}{y} \psiout{n}{(x\cat y)}\inact)\\
		& + \tau.(\encode{P \bowtie R}_{\rho\cup\set{\spos{s}:k}} \Par \pass{k}) )}
	\end{array}
	\\

	\begin{array}{ll}
		\encode{\binp{\spos{s}}{x, u}^1 P \bowtie R}_{\rho\cup\set{\spos{s} : k}} = &
        	\newsp{n}{(\psiinp{(\spos{s}, k, \mathsf{u})}{y}{y} \psiout{n}{(u\cat y)}\inact) +
		\tau.(\encode{P \bowtie R}_{\rho\cup\set{\spos{s}:k}}[x:=u] \Par \pass{k})   \\ 
		& \Par !(\psiinp{n}{x}{x}(\psiinp{(\spos{s}, k, \mathsf{u})}{y}{y} \psiout{n}{(x\cat y)}\inact) +
		\tau.(\encode{P \bowtie R}_{\rho\cup\set{\spos{s}:k}} \Par \pass{k}) )
		}
	\end{array}
	\\

	\encode{\binp{\sneg{s}}{x} P \bowtie R}_{\rho \cup \set{\sneg{s}:k}} =
	\psiinp{(\sneg{s}, k)}{x}{x} (\encode{P \bowtie R}_{\rho \cup \set{\sneg{s}:k}} \Par \pass{k}) + \encode{R}_{\rho \cup \set{\sneg{s}:k}}
	\\

	\encode{\bsel{\spos{s}}{l} P \bowtie R}_{\rho \cup \set{\sneg{s}:k}} =
	\psiout{(\spos{s}, l, k)}{*} (\encode{P \bowtie R}_{\rho \cup \set{\spos{s}:k}} \Par \pass{k})

        \qquad
        \encode{\pass{k}}_{\rho \cup\set{s^p : k}} = \pass{k}
	\\

	\encode{\bbra{\sneg{s}}{l_i:P_i}_{i \in I} \bowtie R}_{\rho \cup \set{\sneg{s}:k}} =
	\Sigma_{i \in I} \psiinp{(\sneg{s}, l_i, k)}{}{*} (\encode{P_i \bowtie R}_{\rho \cup \set{\sneg{s}:k}} \Par \pass{k}) + \encode{R}_{\rho \cup \set{\sneg{s}:k}}
	\\
	
	\encode{\rec{X^0}{P \bowtie R}}_{\rho} =
	\news{n} (!(\psiinp{n}{}{*} \encode{P \bowtie R}_{\rho \cup \set{\varp{X}:n}}) \Par \psiout{n}{*}\inact)
	\\
	\encode{\rec{X^1}{P \bowtie R}}_{\rho} =
	\news{n} (\encode{P \bowtie R}_{\rho \cup \set{\varp{X}:n}} \Par !(\psiinp{n}{}{*} \encode{P \bowtie R}_{\rho \cup \set{\varp{X}:n}}))
	\\

	\encode{\varp{X}}_{\rho  \cup \set{\varp{X}:n}} =
	\psiout{n}{*} \inact
	\quad
	\encode{\inact}_\rho = \inact
	\quad \ 
	\encode{\inact \bowtie R}_\rho = \inact
	\quad \ 
	\encode{P \Par Q}_\rho = \encode{P}_\rho \Par \encode{Q}_\rho
	\quad \ 
	\encode{\news{n} P}_\rho = (\nu n)\encode{P}_\rho
\end{array}
\]

The encoding respects the following desirable properties.

\begin{lemma}[Encoding Properties]\rm
	Let $P$ be a session broadcast process.

	\begin{tabular}{l}
		1. $\encode{P[x:=v]} = \encode{P}[x:=v]$ \\
		2. $\encode{P} \rightarrow Q$ implies that for a session broadcast process $P', Q \scong_\Psi \encode{P'}$.
	\end{tabular}
\end{lemma}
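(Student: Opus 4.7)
The plan is to prove the two clauses in turn, by induction on the structure of the session broadcast process $P$, and then by case analysis on the reduction $\encode{P} \rightarrow Q$.

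For clause (1), I will proceed by induction on the structure of $P$, examining each rule of the translation. The compositional cases (parallel, restriction, replication via $\rec{X^b}{P \bowtie R}$) follow immediately from the induction hypothesis. The base cases and prefix cases reduce to checking that substitution commutes with the translation of each prefix. The only subtlety here is that the translation introduces fresh names $k$ (for the per-session enumeration assertion) and auxiliary names $n$ (for the iterated-receive continuations in the encoding of $\binp{\spos{s}}{x, u}^b P$); these names are bound by $(\nu k)$ and $(\nu n)$, and I would pick representatives satisfying $k, n \freshin x, v$, after which substitution passes under the binders and one appeals to the standard substitution lemma for $\psi$-calculi. I would also need to observe that the assertion $\pass{k}$ contains no free occurrence of $x$ and is therefore inert with respect to $[x:=v]$.

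For clause (2), I will analyse the reduction $\encode{P} \rightarrow Q$ using the two reduction rules of the broadcast $\psi$-calculus, matching on which prefixes in $\encode{P}$ are the participants of the redex. The ``easy'' cases are session initiation (unicast on $a$), the single-destination send on $\sneg{s}$ paired with $\binp{\sneg{s}}{x}P$, the broadcast on $\spos{s}$ with one or more $\binp{\sneg{s}}{x}P$ listeners, and the select/branch case; in each of these, inspection of the encoding identifies a unique session reduction whose translation is structurally congruent to $Q$, with the residual assertions $\pass{k}$ accounting for the enumeration growth. For the non-deterministic recovery step and for the recursion unfolding via $\rec{X^b}{P \bowtie R}$ the residual agent is structurally congruent (using the abelian-monoid laws and the scope-extrusion rules together with $\psicong$) to the encoding of $R$ or of the unfolded recursion body, so $P'$ can be exhibited explicitly.

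The main obstacle, and the delicate case, is the encoding of the gather prefix $\binp{\spos{s}}{x}^b P \bowtie R$, which uses an auxiliary broadcast channel $n$ together with a replicated agent that iteratively accumulates incoming values by concatenation $x \cat y$ and a $\tau$-guarded summand for non-deterministic termination. Here a single $\psi$-reduction of $\encode{P}$ does not correspond to a full step of the session calculus: after one iteration the residual is not literally the translation of another session process but becomes so only after using $\psicong$ to re-organise the accumulator state and the fresh-name scope around $n$. I would handle this by strengthening the induction hypothesis to track the ``$b$-annotation'', showing that the residual after one $n$-reduction is congruent to $\encode{\binp{\spos{s}}{x, u}^1 P \bowtie R}_{\rho \cup \set{\spos{s}:k}}$ with the accumulator updated, while a $\tau$-reduction terminates the iteration and yields $\encode{P[x := \_\,] \bowtie R}_\rho$ (or $\encode{R}$). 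Once this invariant is established, every intermediate reduct of $\encode{P}$ is $\psicong$ to the encoding of some session broadcast process $P'$, which is exactly what clause (2) requires.
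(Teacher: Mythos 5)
Your proposal is correct and follows essentially the argument the paper's definitions are engineered for: structural induction with fresh choices of the bound names $k$ and $n$ (so substitution passes under the binders and the inert assertion $\pass{k}$) for clause (1), and case analysis on the $\psi$-reduction for clause (2), exploiting the $b$-annotations on $\binp{\spos{s}}{x, u}^b P$ and $\rec{X^b}{P}$ together with the clause $\encode{\pass{k}}_{\rho} = \pass{k}$ so that intermediate reducts are themselves encodings of (annotated) session processes. In particular, your key invariant for the gather case --- a single communication on $n$ takes the $0$-annotated encoding to the $1$-annotated encoding with the accumulator updated to $u \cat y$, while the $\tau$-summand terminates the iteration --- is precisely the two-step correspondence the paper's annotation machinery is introduced to support, so the approaches coincide.
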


		\section{Broadcast Session Types}

	Broadcast session types syntax is identical to
	classic binary session type syntax
	(cf.~\cite{yoshida.vasconcelos:language-primitives}),
	with the exception that we do not allow session channel
	delegation. We assume the duality relation as defined in 
	\cite{yoshida.vasconcelos:language-primitives}. Note that
	we do not need to carry the session prefix enumeration
	in the session type system or semantics. Session prefix 
	enumeration is used operationaly only to avoid communication
	missmatch.

	\begin{tabular}{rcl}
		$S$ &$\bnfis$& $!U; S \bnfbar ?U; S \bnfbar \oplus\set{l_i:S_i}_{i \in I} \bnfbar \&\set{l_i:S_i}_{i \in I} \bnfbar \tinact
		\bnfbar \vart{X} \bnfbar \trec{X}{S}$\\
		$U$ &$\bnfis$& $\chtype{S} \bnfbar [U]$
	\end{tabular}

	\noi Typing judgements are:
	\begin{tabular}{c}
		$\Gamma \proves P$
	\end{tabular}
	read as $P$ is typed under environment $\Gamma$, with

	\begin{tabular}{c}
		$\Delta \bnfis \es \bnfbar \Delta \cat s^p:S \qquad \qquad
		\Gamma \bnfis \es \bnfbar \Gamma \cat a: \chtype{S} \bnfbar \Gamma \cat s^p : S \bnfbar \Gamma \cat \vart{X} : \Delta$
	\end{tabular}

	\noi	$\Delta$ environments map only
		session names to session types, while $\Gamma$ maps
		shared names to shared types, session names to session
		types and process variables to $\Delta$ mappings.

	The rules below define the broadcast session type system:
\[
\small
	\begin{array}{cl}
		\Gamma \cat n:U \proves n:U \ \trule{Name}
		\quad
		\tree{
			\Gamma \proves P \quad s^p \notin \fn{P}
		}{
			\Gamma \cat s^p: \tinact \proves P
		} \ \trule{Weak}
		\quad
		\tree {
			\s \notin \dom{\Gamma}
		}{
			\Gamma \proves \inact \ \trule{Inact}
		}
		\quad
		\tree {
			\Gamma \proves R \quad s^p \notin \dom{\Gamma}
		}{
			\Gamma \proves \inact \bowtie R
		} \ \trule{Recov}
	\end{array}
\]
\[
\small
	\begin{array}{ccc}
		\tree{
			\Gamma \proves a:\chtype{S} \quad \Gamma \proves \spos{s}:S \quad \Gamma \proves P
		}{
			\Gamma \cat \sneg{s}:\dual{S} \proves \pisend{a}{\sneg{s}} P
		} \ \trule{BInit}
                \quad
		\tree{
			\Gamma \proves a : \chtype{S} \quad \Gamma\cat x:\dual{S} \proves P
		}{
			\Gamma \proves \pirecv{a}{x}P
		} \ \trule{BAcc}
                \\[5mm]

		\tree{
			\Gamma \cat \spos{s} : S \proves P \bowtie R \quad \Gamma \proves v: \chtype{S'}
		}{
			\Gamma \cat \spos{s} : !\chtype{S'};S  \proves \bout{\spos{s}}{v} P \bowtie R
		} \ \trule{BSend}
                \quad
		\tree{
			\Gamma \cat \sneg{s} : S \proves P \bowtie R \quad \Gamma \proves v:\chtype{S'} \quad \sneg{s}\notin \dom{\Gamma}
		}{
			\Gamma \cat \sneg{s}: !\chtype{S'};S \proves \bout{\sneg{s}}{v} P \bowtie R
		} \ \trule{USend}
                \\[5mm]

		\tree{
			\Gamma \cat \spos{s}: S \cat x : \chtype{S'} \proves P \bowtie R  \quad \Gamma \proves u: [\chtype{S'}]
		}{
			\Gamma \cat \spos{s}: ?\chtype{S'};S \proves \binp{\spos{s}}{x, u}^b P \bowtie R
		} \trule{URcv}
		\quad
		\tree{
			\Gamma \cat \sneg{s}: S \cat x: \chtype{S'} \proves P \bowtie R \quad \sneg{s} \notin \dom{\Gamma}
		}{
			\Gamma \cat \sneg{s}: ?\chtype{S'};S \proves \binp{\sneg{s}}{x} P \bowtie R
		} \ \trule{BRcv}
                \\[5mm]

		\tree{
			\Gamma \cat \spos{s} : S_k \proves P \bowtie R \quad k \in I
		}{
			\Gamma \cat \spos{s} : \oplus\set{l_i:S_i}_{i \in I} \proves \bsel{\spos{s}}{l_k} P \bowtie R
		} \ \trule{Sel}
		\quad
		\tree{
			\Gamma \cat \sneg{s} : S_i \proves P_i \bowtie R \quad \sneg{s} \notin \dom{\Gamma}
		}{
			\Gamma \cat \sneg{s} : \&\set{l_i:S_i}_{i \in I} \proves \bbra{\sneg{s}}{l_i: P_i}_{i \in I} \bowtie R
		}\ \trule{Bra}
		\\[5mm]

		\tree{
			\Gamma_1 \proves P_1 \quad \Gamma_2 \proves P_2 \quad \spos{s} \notin \dom{\Gamma_1} \cap \dom{\Gamma_2}
		}{
			\Gamma_1 \cup \Ga_2 \proves P_1 \Par P_2
		} \ \trule{Par}
		\quad
		\tree{
			\Gamma \cat \spos{s}: S \cat \set{\sneg{s}: \dual{S_i}}_{i \in I} \proves P \quad S = S_i
		}{
			\Gamma \proves \news{s} P
		} \ \trule{SRes}
		\\[5mm]

		\tree{
			\Gamma \cat a:\chtype{S} \proves P
		}{
			\Gamma \proves \news{a} P
		} \ \trule{ShRes}
		\quad
		\tree{
			\Gamma \cup \Delta \cat \varp{X}: \Delta \proves P \quad s^p \notin \dom{\Gamma}
		}{
			\Gamma \cup \Delta \proves \rec{X^b} P
		} \ \trule{Rec}
		\quad
		\Gamma \cup \Delta \cat \varp{X}: \Delta \proves \varp{X} \ \trule{RVar}
	\end{array}
\]
	\noi Rule $\trule{Recov}$ types the recovery process. We expect no free session
	names in a recover process. Rules
	$\trule{BInit}, \trule{BAcc}, \trule{BSend}, \trule{Usend}, \trule{BRcv},
	\trule{BRcv}, \trule{Sel}$ and $\trule{Bra}$
	type prefixes in the standard way, i.e.\ check for
	object and the subject type match.
	Rule $\trule{URcv}$ types both binary instances of the unicast receive prefix
	with the same type.
	We require
	that the recovery process is carried and typed inductively in the structure of
	a process. A recovery process must not (re)use any session endpoints ($\trule{Recov}$).
	Also we require the $\sneg{s}$ to be the only one in $\Ga$. Multiple
	$\sneg{s}$ endpoints are collected using the $\trule{Par}$ rule. The $\trule{Par}$
	rule expects that there is no duplicate $\spos{s}$ endpoint present inside a process.
	When restricting a session name we check endpoint $\spos{s}$ and the set 
	of endpoints $\sneg{s}$ to have dual types. The rest of the rules are standard.

	\subsection{Soundness and Safety}

We use the standard notion of a context $\mathcal{C}$ on session types $S$ with
a single hole denoted as $[]$.
We write $\mathcal{C}[S]$ for filling a hole in $C$
with the type $S$.
We define the set of non-live sessions in a context as
$d(\Gamma) = \set{\sneg{s} : S \setbar \spos{s} : S' \in \Gamma \text{ and } \dual{S} = C[S'] \text{ with } C \not= []}$
and  live $l(\Gamma) = \Gamma \setminus d(\Gamma)$.
We say that $\Gamma$ is well typed iff $\forall \spos{s}:S \in l(\Gamma)$ then
$\set{\sneg{s}:\dual{S_i}}_{i \in I} \subset l(\Gamma)$ with $S = S_i$ or $S = ?U;S_i$. 

\begin{theorem}[Subject Congruence]\rm
	\label{thm:subject-congruence}
	If $\Gamma \proves P$ with $\Gamma$ well typed and $P \scong P'$ then
	$\Gamma \proves P'$.
\end{theorem}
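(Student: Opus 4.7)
The plan is to proceed by induction on the derivation of $P \scong P'$. Since structural congruence is the least congruence generated by the stated axioms, we handle each axiom in both directions, together with the congruence-closure rules (symmetry, transitivity, and propagation through process constructors). Each of the latter cases is discharged by inverting the outermost typing rule of the enclosing constructor and applying the induction hypothesis to the affected subterm; the interesting work is in the base axioms.

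For the commutative-monoid laws on parallel composition, we invert $\trule{Par}$ (possibly several times) and reassemble the premises after a different splitting of $\Gamma$ into disjoint pieces $\Gamma_1, \Gamma_2$ (and $\Gamma_3$ for associativity). Commutativity is immediate from the symmetry of $\Gamma_1 \cup \Gamma_2$. The identity axiom $P \Par \inact \scong P$ uses $\trule{Inact}$ together with $\trule{Weak}$ in the right-to-left direction to dispose of any session endpoints that appear on the $\inact$ side with residual type $\tinact$.

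For the restriction axioms, the case $\news{n}\inact \scong \inact$ is immediate from $\trule{ShRes}$ (or $\trule{SRes}$, in the session-name variant) together with $\trule{Inact}$. The scope-extrusion axiom $P \Par \news{n}Q \scong \news{n}(P \Par Q)$ with $n \notin \fn{P}$ splits on whether $n$ is a shared name or a session name. In each sub-case we invert the outermost rule, use a standard lemma stating that $n \notin \fn{P}$ together with $\Gamma_P \proves P$ implies $n \notin \dom{\Gamma_P}$ modulo $\tinact$-typed entries (disposable via $\trule{Weak}$), re-split the environment, and reapply the rules in the opposite order.

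The main obstacle is the session-name sub-case of scope extrusion, because $\trule{SRes}$ requires that the premise context contain exactly one $\spos{s}:S$ endpoint together with the full collection $\set{\sneg{s}:\dual{S_i}}_{i\in I}$ of compatible duals. When moving $\news{s}$ outward past a $\Par$, one must verify that every endpoint of $s$ lives in $Q$, which follows from $s \notin \fn{P}$, and that the reassembled context for $\news{s}(P \Par Q)$ still satisfies the $\trule{SRes}$ precondition. The hypothesis that $\Gamma$ is well typed is precisely what guarantees that the dual-endpoint invariant $\set{\sneg{s}:\dual{S_i}}_{i\in I} \subset l(\Gamma)$ associated with each live $\spos{s}:S$ is preserved across the rearrangement, so that no live session is split across the $\Par$ boundary in a way that $\trule{SRes}$ would reject.
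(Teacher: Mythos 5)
Your overall strategy --- induction on the derivation of $P \scong P'$, inversion of the outermost typing rule, and reassembly of the premises, with the session-name case of scope extrusion as the crux --- is the standard route and is structurally the right one for this theorem. However, two of your auxiliary claims are false as stated, and one of them is load-bearing. The lemma you invoke for scope extrusion --- that $n \notin \fn{P}$ and $\Gamma_P \proves P$ imply $n \notin \dom{\Gamma_P}$ modulo $\tinact$-typed session entries --- does not hold in this system: shared names are weakened implicitly at the leaves. For instance $a:\chtype{S} \proves \inact$ is derivable by $\trule{Inact}$, whose side condition excludes only session names, so $\dom{\Gamma_P}$ may contain shared names that are neither free in $P$ nor $\tinact$-typed session endpoints. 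What you actually need is the converse tool, a \emph{strengthening} lemma: if $\Gamma \cat n : U \proves P$ and $n \notin \fn{P}$ then $\Gamma \proves P$, covering shared entries and $\tinact$-typed session entries (provable by a routine induction on typing). With that lemma in place of your false one, both directions of the scope-extrusion cases go through. Symmetrically, the left-to-right direction of $P \Par \inact \scong P$ needs a shared-name \emph{weakening} lemma, not $\trule{Weak}$ (which only introduces entries of the form $s^p:\tinact$), since the environment typing the $\inact$ component may contain shared names absent from the component typing $P$.

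Second, your description of $\trule{Par}$ as splitting $\Gamma$ into \emph{disjoint} pieces is incorrect: the rule forms $\Gamma_1 \cup \Gamma_2$ with only the side condition $\spos{s} \notin \dom{\Gamma_1} \cap \dom{\Gamma_2}$, so shared names and, crucially, $\sneg{s}$ endpoints may (and must) occur on both sides --- this is precisely how the paper collects multiple $\sneg{s}$ endpoints across parallel components. A genuinely disjoint re-splitting is impossible for, e.g., two components using the same shared channel $a$; fortunately your argument only needs commutativity and associativity of $\cup$, so the monoid cases survive once restated with overlapping unions. Finally, since your induction hypothesis carries the assumption that the environment is well typed, you should verify that the environments produced by inversion remain well typed before applying it --- e.g.\ the $\trule{SRes}$ premise environment $\Gamma \cat \spos{s}:S \cat \set{\sneg{s}:\dual{S_i}}_{i \in I}$ is well typed thanks to the $S = S_i$ side condition --- rather than invoking well-typedness only in the final scope-extrusion case.
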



\begin{theorem}[Subject Reduction]\rm
	\label{th:sr}
	\label{thm:subject-reduction}
	If $\Gamma \proves P$ with $\Gamma$ well typed, $dom(\rho)\subseteq dom(\Gamma)$ and $\encode{P}_\rho \rightarrow Q$,
        then there is $P'$ such that $\encode{P'}_\rho\psicong Q$,
	$\Gamma' \proves P'$ and $\Gamma'$ well typed with either
		$\Gamma' = d(\Gamma) \cup l(\Gamma')$ or
		$\Gamma' = d(\Gamma)\setminus \set{\sneg{s}: S} \cup l(\Gamma')$ or 
		$\Gamma' = d(\Gamma) \cup \set{\sneg{s} : S} \cup l(\Gamma')$.
\end{theorem}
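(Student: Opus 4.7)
The plan is to proceed by induction on the structure of $P$, combined with case analysis on which $\psi$-calculus reduction rule fires on $\encode{P}_\rho$. By Theorem~\ref{thm:subject-congruence} we may work up to $\scong_\Psi$, so at every step we are free to rewrite $\encode{P}_\rho$ into a form exposing the active redex. The first concrete step is to invert the reduction: by the shape of the two reduction rules for broadcast $\psi$-calculi, every reduction of $\encode{P}_\rho$ is either a unicast interaction between a matching $\psiout{M}{N}$ and $\psiinp{M'}{\seq{x}}{N'}$, or a broadcast interaction between one output $\psiout{M}{N}$ and a family of inputs. Reading off the $\encode{\cdot}_\rho$ clauses, such an active prefix can only arise from one of the source-level constructors; this gives a finite enumeration of reduction classes to handle.

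For each class I would first invert the typing $\Gamma\proves P$ to extract typing judgements for the subterms, then construct the source-level $P'$, and finally exhibit the required $\Gamma'$. The straightforward classes are session initiation (from \trule{BInit} and \trule{BAcc}, where $\Gamma' = \Gamma$ and the broadcast reduction rule matches one $\spos{s}$ with several $\sneg{s}$ endpoints of the $\binp{\sneg{s}}{x} P \bowtie R$ shape), plain send/receive from rules \trule{BSend}/\trule{BRcv} and \trule{USend}/\trule{URcv}, selection/branching from \trule{Sel}/\trule{Bra}, the parallel rule, and recursion unfolding, which correspond to the first case $\Gamma' = d(\Gamma)\cup l(\Gamma')$ and use Encoding Properties part~(1) to equate $\encode{P[x{:=}v]}$ with $\encode{P}[x{:=}v]$. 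The second case $\Gamma' = d(\Gamma)\setminus\{\sneg{s}:S\}\cup l(\Gamma')$ arises when a gather broadcast or an $\spos{s}$ selection actually reaches a previously-dead $\sneg{s}$ endpoint and advances it (by removing it from $d(\Gamma)$); the third case $\Gamma' = d(\Gamma)\cup\{\sneg{s}:S\}\cup l(\Gamma')$ arises either when $\bout{\spos{s}}{v}P$ broadcasts but some $\sneg{s}$ endpoint does not receive (the broadcast reduction lets a subset of the compatible inputs be silent, via the context $C$), or when the $+$ branch to a recovery process $R$ is taken in $\bout{\sneg{s}}{v}P$, $\binp{\sneg{s}}{x}P$, or the gather-stop $\tau$ branch. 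In both situations the endpoint becomes inert but persists at a strictly-greater session type $C[S']$, which is exactly what the definition of $d(\Gamma)$ tracks; well-typedness of $\Gamma'$ is preserved because the surviving $\spos{s}$ endpoint is also advanced one step, so the duality $\dual{S} = C[S']$ with $C\not=[]$ holds by construction.

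The most delicate step, and the one I expect to be the main obstacle, is the gather construct $\binp{\spos{s}}{x,u}^b P \bowtie R$. Its encoding uses a fresh channel $n$, an auxiliary replication, and a two-phase annotation $b\in\{0,1\}$ to distinguish the "no datum yet" state from the "at least one datum gathered" state. Reductions internal to this encoding, i.e. synchronisations on the private $n$, do not correspond to any session-level action, so the source-level witness $P'$ must be obtained by advancing the phase annotation and updating the accumulator variable $u$ while leaving the global $\Gamma$ unchanged; here the cleanest argument appeals to Encoding Properties part~(2) directly, identifying $Q$ as $\encode{P'}_\rho$ up to $\scong_\Psi$ for the right choice of $P'$. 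The external synchronisation on $(\spos{s},k,\mathsf{u})$ corresponds to an $\sneg{s}$ endpoint being consumed from the dead set $d(\Gamma)$, landing in the second case above, while the internal $\tau.(\encode{P\bowtie R}_{\rho\cup\{\spos{s}:k\}}\Par\pass{k})$ transition corresponds to non-deterministic termination of the gather, landing in the third case as remaining $\bout{\sneg{s}}{v_j}P_j$ siblings become permanently stuck. Once this construct is handled, the remaining cases (restriction, recursion, inactive, and the recovery rule \trule{Recov}) follow by a direct induction using the congruence laws of $\psicong$ and the fact that $\encode{\news{n}P}_\rho = (\nu n)\encode{P}_\rho$ commutes with reduction contexts.
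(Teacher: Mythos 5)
Your overall skeleton is sound and matches the intended argument: invert the reduction using the two reduction rules of the broadcast $\psi$-calculus, read the active prefixes back through the clauses of $\encode{\cdot}_\rho$, invert the typing derivation, obtain the source-level witness $P'$ from part~(2) of the Encoding Properties lemma (with part~(1) for value passing, and the phase annotations $b\in\set{0,1}$ absorbing the gather's internal synchronisations on the private channel $n$). However, your assignment of the operational scenarios to the three shapes of $\Gamma'$ is wrong, and the error is conceptual rather than clerical. An endpoint in $d(\Gamma)$ can never again be ``reached'' by a gather or an $\spos{s}$ selection: the assertions $\pass{k}$ count fired prefixes, and the connectivity conditions ($\Psi \proves (\spos{s},i) \incon (\sneg{s},k)$ iff $\Psi(k)=i$, and similarly $\chequiv$ on $(s^p,k,\mathsf{u})$) fail for an endpoint whose counter lags behind --- blocking exactly this communication is the entire purpose of prefix enumeration. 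So your explanation of the second case describes an impossible reduction. What actually yields $\Gamma' = d(\Gamma)\setminus\set{\sneg{s}:S}\cup l(\Gamma')$ is a \emph{dead} endpoint taking its recovery branch $+\,\encode{R}$: by \trule{Recov} a recovery process carries no session endpoints, so the entry $\sneg{s}:S$ vanishes from the environment. Symmetrically, your placement of the recovery branch under the third case is wrong for the same reason: an endpoint that enters recovery does not ``persist at a strictly-greater type $C[S']$'' --- it disappears. The third case, $\Gamma' = d(\Gamma)\cup\set{\sneg{s}:S}\cup l(\Gamma')$, arises only when new dead endpoints are \emph{created}: a broadcast send or selection from $\spos{s}$ that misses some receivers, or the gather-stop $\tau$ stranding senders that have not yet sent (this last part you do have right).

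A related misstep: the external gather synchronisation on $(\spos{s},k,\mathsf{u})$ is an instance of the \emph{first} case, not the second. The sender $\bout{\sneg{s}}{v}P$ is live before the step --- the disjunct $S = ?U;S_i$ in the definition of well-typedness exists precisely to cover senders facing a gathering $\spos{s}$ --- and after sending it sits at $\dual{S_0}$ against $\spos{s}:{?U;S_0}$, which is still live since $S_0 = C[?U;S_0]$ with $C\neq[]$ is impossible; the dead set is untouched, and only $l(\Gamma')$ changes. (Note also that the gather is iterated \emph{unicast} in the encoding, not a broadcast, so calling it a ``gather broadcast'' muddles which reduction rule fires.) With the case bookkeeping corrected --- recovery of a dead endpoint for case two, missed broadcasts/selections and gather termination for case three, everything else including gather receives for case one --- the remainder of your proposal goes through as planned.
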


\begin{definition}[Error Process]\rm
	Let $s$-prefix processes to have the following form:

	\begin{tabular}{l}
		1. $\bout{\spos{s}}{v} P$ \qquad 2. $\bsel{\spos{s}}{l} P$ \qquad 3. $\binp{\spos{s}}{x} P$ \qquad
		4. $\prod_{i \in I} \binp{\sneg{s}}{x} P_i \Par \prod_{j \in J} \context{C_j}{\binp{\sneg{s}}{x} P_j}$\\
		5. $\prod_{i \in I} \bout{\sneg{s}}{v_i} P_i \Par \prod_{k \in K} P_k \Par \prod_{j \in J} \context{C_j}{\binp{\sneg{s}}{x} P_j}$\\
		where $\prod_{i \in I} P_i \Par \prod_{k \in K} P_k \Par \prod_{j \in J} \context{C_j}{\binp{\sneg{s}}{x} P_j}$ forms an $s$-redex.\\
		6. $\prod_{i \in I} \bbra{\sneg{s}}{l_k: P_k}_{k \in K_i} \Par \prod_{j \in J} \context{C_j}{\bbra{\sneg{s}}{l_k: P_k}_{k \in K_j}}$
	\end{tabular}

	\noi with $\context{C_j}{}$ being a context that contains $\sneg{s}$ prefixes.

	A valid $s$-redex is a parallel composition of either $s$-prefixes 1 and 4, $s$-prefixes
	2 and 6, or $s$-prefixes 3 and 5. Every other combination of $s$-prefixes is invalid.
	An error process is a process of the form $P \scong \newsp{\vect{n}}{R \Par Q}$
	where $R$ is an invalid $s$-redex and $Q$ does not contain any other $s$-prefixes.
\end{definition}

\begin{theorem}[Type Safety]\rm
	A well typed process will never reduce into an error process.
\end{theorem}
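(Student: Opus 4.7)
The plan is to prove the contrapositive via subject reduction: if $P$ is well typed and reduces to $P'$, then $P'$ is well typed, so it suffices to prove that no well-typed process is an error process. Formally, I would argue by induction on the number of reduction steps: the base case uses Theorem~\ref{thm:subject-reduction} iteratively to obtain some $\Gamma'$ such that $\Gamma' \proves P'$ with $\Gamma'$ well typed and satisfying the structural constraints stated there; the inductive step is immediate. The core lemma is therefore: \emph{no well typed $P'$ has the shape $P' \scong \newsp{\vect{n}}{R \Par Q}$ with $R$ an invalid $s$-redex.}

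For this core lemma I would apply Subject Congruence (Theorem~\ref{thm:subject-congruence}) to transport the typing across $\scong$, then invert the typing rules \trule{ShRes}/\trule{SRes} on $\newsp{\vect{n}}{\cdot}$ until $R \Par Q$ is typed under an environment $\Gamma_R$ containing the bindings $\spos{s}:S$ and $\set{\sneg{s}:\dual{S_j}}_{j\in J}$, where well-typedness forces $S = S_j$ or $S = ?U;S_j$ for every $j$. Inverting \trule{Par} splits $\Gamma_R$ across $R$ and $Q$; since $Q$ contains no $s$-prefixes, the $s$ bindings live entirely in the sub-environment typing $R$. Finally, inversion of the prefix rules \trule{BSend}, \trule{USend}, \trule{URcv}, \trule{BRcv}, \trule{Sel}, \trule{Bra} on each $s$-prefix in $R$ forces the head constructor of the associated session type: a prefix of form 1 or 2 forces $S$ to begin with $!\chtype{S'}$ or $\oplus\set{\ldots}$; a prefix of form 3 forces $S$ to begin with $?\chtype{S'}$; prefixes of form 4, 5, 6 force each $\dual{S_j}$ to begin with $?\chtype{S'}$, $!\chtype{S'}$, or $\&\set{\ldots}$ respectively.

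The case analysis then enumerates the invalid combinations. For example, an invalid redex pairing $s$-prefix 1 with 5 forces $S$ to begin with $!\chtype{S'}$ while some $\dual{S_j}$ begins with $!\chtype{S''}$; but well-typedness requires either $S = S_j$ (impossible since $S$ and $\dual{S_j}$ then have incompatible top constructors) or $S = ?U;S_j$ (impossible since $S$ begins with $!$). Each of the remaining forbidden pairings — 1-6, 2-4, 2-5, 3-4, 3-6, as well as intra-pattern mismatches such as a $\bsel{\spos{s}}{l_k}$ with $l_k \notin \{l_i\}_{i\in I}$ of a companion branching — similarly contradicts the $S = S_j$ or $S = ?U;S_j$ condition by yielding incompatible head constructors or incompatible label/value types.

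The main obstacle will be the valid pairing (3,5) and its relation to the clause $S = ?U;S_j$ in well-typedness: the gather prefix permits $\sneg{s}$ receivers that are \emph{still} carrying a pending unicast send (this is precisely why the well-typed condition admits $S = ?U;S_j$ and why Subject Reduction allows a $\sneg{s}:S$ to persist in $d(\Gamma')$). I would have to take care, when ruling out invalid redexes that involve a $\binp{\spos{s}}{x} P$, that a lingering $\bout{\sneg{s}}{v} P_j$ inside a $C_j[\cdot]$ context is correctly accounted for; this requires threading the invariant that $d(\Gamma')$ only holds $\sneg{s}$ endpoints whose residual type matches the $?U;S_j$ clause, and these are exactly the ones that can legitimately sit inside the $C_j$ contexts of a valid (3,5) redex — so any other configuration forces a head-constructor mismatch and hence a contradiction.
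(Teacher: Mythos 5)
Your proposal follows exactly the paper's route: the paper's entire proof is the observation that Type Safety is a direct consequence of Subject Reduction (Theorem~\ref{th:sr}) together with the claim that error processes are not well typed, which is precisely your decomposition. The only difference is one of detail rather than strategy --- the paper asserts ``error processes are not well typed'' without elaboration, whereas you correctly fill in that step via Subject Congruence, inversion of \trule{SRes}/\trule{Par} and the prefix rules, and a head-constructor case analysis against the well-typedness condition ($S = S_i$ or $S = ?U;S_i$), including the right caution about the valid $(3,5)$ gather case and the $d(\Gamma)$ endpoints inside the $C_j$ contexts.
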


\begin{proof}
	The proof is a direct consequence of the Subject Reduction Theorem~(\ref{th:sr})
	since error process are not well typed.
\end{proof}

	\section{Conclusion}
We have defined a system of session types for a calculus based on
unreliable broadcast communication. This is the first time that
session types have been generalised beyond reliable point-to-point
communication. We defined the operational semantics of our calculus by
translation into an instantiation of broadcast $\psi$-calculi, and
proved subject reduction and safety results. The use of the
$\psi$-calculi framework means that we can try to use its general
theory of bisimulation for future work on reasoning about
session-typed broadcasting systems. The definition of a session typing
system is also a new direction for the $\psi$-calculi framework.

\paragraph*{Acknowledgements}
Kouzapas and Gay are supported by the UK EPSRC project
  ``From Data Types to Session Types: A Basis for Concurrency and
  Distribution'' (EP/K034413/1). This research was supported by a
  Short-Term Scientific Mission grant from COST Action IC1201
  (Behavioural Types for Reliable Large-Scale Software Systems).  

	\bibliographystyle{eptcs}
	\bibliography{session}
	\ifappendix
	\newpage
	\appendix
	\input{full_proofs}
	\fi

\end{document}
